\renewcommand{\baselinestretch}{0.99}\small\normalsize
\begin{document}

\title{Compressive Identification of Linear Operators}

\author{\IEEEauthorblockN{Reinhard Heckel and Helmut B\"olcskei}
\IEEEauthorblockA{Department of Information Technology and Electrical Engineering, ETH Zurich\\
E-mail: \{heckel,boelcskei\}@nari.ee.ethz.ch}}

\maketitle

\renewcommand{\sfrac}[2]{#1/#2}

\begin{abstract}
We consider the problem of identifying a linear deterministic operator from an input-output measurement. For the large class of continuous (and hence bounded) operators, under additional mild restrictions, we show that stable identifiability is possible if the total support area of the operator's spreading function satisfies $\SSF \leq \sfrac{1}{2}$. This result holds for arbitrary (possibly fragmented) support regions of the spreading function, does not impose limitations on the total extent of the support region, and, most importantly, does not require the support region of the spreading function to be known prior to identification. 
Furthermore, we prove that asking for identifiability of only almost all operators, stable identifiability is possible if $\SSF \leq 1$. 
This result is surprising as it says that there is no penalty for not knowing the support region of the spreading function prior to identification. 
\end{abstract}

\IEEEpeerreviewmaketitle

\section{Introduction}
Identification of deterministic linear operators from an input-output measurement is an important problem in many fields of engineering. Concrete examples include system identification in control theory and practice and the measurement of wireless communication channels. 

It is natural to ask under which conditions on the operator, identification from an input-output measurement is possible in principle, and how one would go about choosing the probing signal and extracting the operator from the corresponding output signal. This paper addresses these questions in generality by considering the (large) class of linear operators that can be represented as a continuous weighted superposition of time-frequency shift operators, i.e., the operator's response to the signal $\prsig(t)$ is given by
\begin{equation}
y(t) = \int_\tau \int_\nu  \sfunc (\tau,\nu) \prsig(t-\tau)  e^{j 2\pi \nu t}  d\nu d\tau
\label{eq:ltvsys}
\end{equation}
where $\sfunc$ denotes the spreading function of the operator  $H$. 
The representation theorem \cite[Thm. 14.3.5]{groechenig_foundations_2001} states that the action of any continuous (and hence bounded) linear operator, under additional mild restrictions, can be represented as in \eqref{eq:ltvsys}. 
In the communications literature \cite{kailath_measurements_1962,bello_measurement_1969,bello_characterization_1963}, operators with input-output relation as in \eqref{eq:ltvsys} are referred to as linear time-varying channels/systems and $\sfunc$ is the delay-Doppler spreading function. 
For the special case of linear time-invariant (LTI) systems, we have $\sfunc(\tau,\nu) = h(\tau)\delta (\nu)$, so that \eqref{eq:ltvsys} reduces to 
\begin{equation}
y(t) = \int_\tau h(\tau) \prsig (t-\tau) d\tau.
\label{eq:lti_system}
\end{equation}
The question of identifiability of LTI systems is readily answered by noting that the system's response to $\prsig(t) = \delta(t)$ is given by its impulse response $h(t)$, which fully characterizes the input-output relation according to \eqref{eq:lti_system}. LTI systems are therefore always identifiable, provided that the input signal can have infinite bandwidth and we can observe the output signal over an infinite duration. 
In the general case (i.e., for LTV systems), the situation is fundamentally different. Specifically, 
 Kailath's landmark paper  \cite{kailath_measurements_1962} shows that LTV systems with spreading function compactly supported on a rectangle are identifiable if and only if  the spreading function's support area satisfies $\SSF \leq 1$. 
Bello \cite{bello_measurement_1969} later pointed out that  Kailath's identifiability result continues to hold, even if the support region of $\sfunc$ is scattered across the $(\tau,\nu)$-plane, as long as the total support area satisfies $\SSF \leq 1$. 
Kozek and Pfander \cite{kozek_identification_2005} and Pfander and Walnut \cite{pfander_measurement_2006} provided functional-analytic proofs of the results in \cite{kailath_measurements_1962,bello_measurement_1969}. 
Common to \cite{kailath_measurements_1962,bello_measurement_1969,kozek_identification_2005,pfander_measurement_2006} is the assumption of the support region of  $\sfunc$ being known prior to identification. This is clearly restrictive and often impossible to realize in practice. 

\paragraph*{Contributions}
In this paper, we consider the problem of identifying deterministic linear operators with spreading function compactly supported on an \emph{unknown}, possibly fragmented, region in the $(\tau,\nu)$-plane. 
We do not impose limitations on the total extent of the support region. Our main result shows that an operator with input-output relation \eqref{eq:ltvsys} is identifiable, without prior knowledge of the spreading function's support region, if and only if the total support area of $\sfunc$ satisfies $\SSF  \leq \sfrac{1}{2}$. 
We then show that this factor of two penalty---compared to the case where the support region is known in advance \cite{kailath_measurements_1962,bello_measurement_1969}---can be eliminated, 
if one asks for identifiability of \emph{almost all} operators only. This result is surprising as it shows that (for almost all operators) there is no price to be paid for not knowing the spreading function's support region in advance. 
We discuss the design of probing signals and we outline an algorithm which, in the noiseless case, provably recovers all operators with $\SSF \leq \sfrac{1}{2}$. Furthermore, we present an algorithm which, again in the noiseless case, provably recovers almost all operators with $\SSF \leq 1$. 

\paragraph*{Relation to previous work}
Recently Bajwa et al. \cite{bajwa_identification_2010} considered the identification of LTV-systems with 
spreading function $\sfunc$ supported  in a rectangle of area $\SSF \leq 1$ and consisting of a finite number of discrete components with unknown delays and Doppler shifts. 
In the present paper, we allow general spreading functions that can be supported in the entire $(\tau,\nu)$-plane with possibly fragmented support region. 
\paragraph*{Notation} We use lowercase boldface letters to denote column vectors, e.g., $\mbf x$, and uppercase boldface letters to designate matrices, e.g., $\mbf X$. 
The superscripts $\conj{}$, $\herm{}$, and $\transp{}$ stand for complex conjugation, Hermitian transposition, and transposition, respectively.  The space spanned by the columns of $\mbf X$ is designated by $\range(\mbf X)$. The entry in the $k$th row and $l$th column of $\mbf X$ is denoted by $[\mbf X]_{k,l}$. For the vector $\mbf x$, the Euclidean norm is denoted as $\norm[\ell_2]{\mbf x}$ and the $k$th entry of $\mbf x$ is $[\mbf x]_k$. 
$|\Omega|$ stands for the cardinality of the set $\Omega$. For two sets $\Omega_1$ and $\Omega_2$ we define set addition as $\Omega_1+\Omega_2= \{\omega: \omega_1 +\omega_2, \, \omega_1 \in \Omega_1, \omega_2 \in \Omega_2\}$. $\delta(x)$ denotes the Dirac delta function. 
For a function $f(\mbf x)$, $\supp (f)$ denotes the support set of $f$.  For two functions $f(\mbf x), g(\mbf x)$, defined on $\Omega$, we write $\innerprod[]{f}{g} \defeq \int_\Omega f(\mbf x) \conj{g}\!(\mbf x) d \mbf x$ for the inner product; $\norm{f} \defeq \sqrt{ \innerprod[]{f}{f} }$ is the norm of $f$. 

\vspace{-0.1cm}
\section{Problem Formulation \label{sec:prform}}
\vspace{-0.05cm}

Given normed linear spaces\footnote{
Since our identifiability proof relies on sending Dirac delta impulses, we need to choose $\opfrom$ such that it contains generalized functions. To keep the exposition simple, we will not dwell on the resulting functional-analytic subtleties. Instead, we refer the interested reader to \cite{kozek_identification_2005,pfander_measurement_2006} for a description of the rigorous mathematical setup required here.  
} \!\!\!\! $\opfrom, \opto$, we consider linear operators $H: \opfrom \mapsto \opto$ that can be represented as a continuous weighted superposition of translation operators $T_\tau$, with $(T_{\tau} \prsig )(t) \defeq \prsig (t-\tau)$, and modulation operators $M_{\nu}$ with $(M_\nu \prsig)(t) \defeq e^{j2\pi \nu t} \prsig (t)$:
\begin{equation}
\vspace{-0.15cm}
(H\prsig)(t) \defeq \int_\tau \int_\nu  \sfunc (\tau,\nu) ( M_{\nu}  T_{\tau} \prsig )(t)   d\nu \hspace{0.02cm}  d\tau. 
\label{eq:hilbertschmidt_s}
\end{equation}
This is a rather general setting, since according to \cite[Thm. 14.3.5]{groechenig_foundations_2001}, any continuous (and hence bounded) linear operator, under additional mild restrictions, can be represented as in \eqref{eq:hilbertschmidt_s}. In the following, denote the linear space of operators that can be represented according to \eqref{eq:hilbertschmidt_s} by $\mc H$, and define the inner product on this space by  
\vspace{-0.05cm}
\[
\innerprod[ \opclass]{H_1}{H_2} \defeq \innerprod[]{s_{\hspace{-0.02cm}H_1}}{s_{\hspace{-0.02cm}H_2}} = \int_{\mb R^2}    s_{\hspace{-0.02cm}H_1}(\tau,\nu) \conj{s}_{\hspace{-0.02cm}H_2}(\tau,\nu) d(\tau,\nu),      
 %\quad H_1,H_2 \in \mc H 
 \vspace{-0.1cm}
\]
$H_1,H_2 \in \mc H $, with the induced norm $\norm[\opclass]{H} \defeq \sqrt{\innerprod[ \opclass]{H}{H}}$. 

\subsubsection*{Restrictions on the spreading function} We assume that the support region of $\sfunc$ has the form 
\begin{equation}
M_\Gamma  \defeq \!\! \bigcup_{(k,m) \in \Gamma } \left( \Un + \left(k \csamp ,  \frac{m}{\csamp L} \right) \right)  \subseteq [0, \tau_{\max}) \times [0, \nu_{\max})
\label{eq:suppgridcont}
\vspace{-0.1cm}
\end{equation}
where $\Un \defeq \left[ 0, \csamp  \right) \times \left[0, 1/\csamp L  \right)$ is a ``fundamental cell'' in the $(\tau,\nu)$-plane  
and $\csamp \in \mb R$ is a parameter whose role will become clear shortly. The set of ``active cells'' is specified by $\Gamma  \subseteq \Sigma \defeq \{(0,0), (0,1),...,(L-1,L-1)\}$. Consequently, we have $\tau_{\max} = \csamp L$ and $\nu_{\max} = 1/\csamp$. 
We denote the support area of  $M_\Gamma$ as $\area(M_\Gamma)$. 
 Given a general support region for $\sfunc$, possibly fragmented and spread over the entire $(\tau,\nu)$-plane, we can choose $\csamp$ and $L$ such that this region can  be approximated arbitrarily well by a support region of the form \eqref{eq:suppgridcont} (see Fig. \ref{fig:supex}).  
\begin{figure}[h!]
\begin{center}

\begin{tikzpicture}[scale=0.75] 
%\draw (\x cm,1pt) -- (\x cm,-1pt) node[anchor=north] {$\x$};
% placing some random rectangle some random re

% cover the big area with rectangles
\foreach \y in {1,1.2}  \filldraw[fill=black!10,draw=black!10] (1.8,\y) rectangle (1.8+0.2,\y+0.2); 
\foreach \y in {0.8,1,...,1.6}  \filldraw[fill=black!10,draw=black!10] (2,\y) rectangle (2+0.2,\y+0.2); 
\foreach \y in {0.8,1,...,2}  	\filldraw[fill=black!10,draw=black!10] (2.2,\y) rectangle (2.2+0.2,\y+0.2); 
\foreach \y in {0.8,1,...,2.4} \filldraw[fill=black!10,draw=black!10] (2.4,\y) rectangle (2.4+0.2,\y+0.2); 
\foreach \y in {0.8,1,...,2.6} \filldraw[fill=black!10,draw=black!10] (2.6,\y) rectangle (2.6+0.2,\y+0.2); 
\foreach \y in {0.8,1,...,3.2} \filldraw[fill=black!10,draw=black!10] (2.8,\y) rectangle (2.8+0.2,\y+0.2); 
\foreach \y in {0.8,1,...,3.6} \filldraw[fill=black!10,draw=black!10] (3  ,\y) rectangle (3   +0.2,\y+0.2); 
\foreach \y in {0.8,1,...,3.8} \filldraw[fill=black!10,draw=black!10] (3.2,\y) rectangle (3.2+0.2,\y+0.2); 
\foreach \y in {0.8,1,...,3.8} \filldraw[fill=black!10,draw=black!10] (3.4,\y) rectangle (3.4+0.2,\y+0.2); 
\foreach \y in {0.8,1,...,3.6}    	\filldraw[fill=black!10,draw=black!10] (3.6,\y) rectangle (3.6+0.2,\y+0.2); 
\foreach \y in {0.8,1,...,2.8} \filldraw[fill=black!10,draw=black!10] (3.8,\y) rectangle (3.8+0.2,\y+0.2); 

% cover the small area
\foreach \x in {0.4,0.6,...,1.6}  \filldraw[fill=black!10,draw=black!10] (\x,4) rectangle (\x+0.2,4+0.2); 
\foreach \x in {0.4,0.6,...,1.6}  \filldraw[fill=black!10,draw=black!10] (\x,4.2) rectangle (\x+0.2,4.2+0.2);
\foreach \x in {0.4,0.6,...,1.4}  \filldraw[fill=black!10,draw=black!10] (\x,4.4) rectangle (\x+0.2,4.4+0.2);
\foreach \x in {0.4,0.6,...,1.2}  \filldraw[fill=black!10,draw=black!10] (\x,4.6) rectangle (\x+0.2,4.6+0.2);
\foreach \x in {0.4,0.6,...,1}  \filldraw[fill=black!10,draw=black!10] (\x,4.8) rectangle (\x+0.2,4.8+0.2);

%%% helplines, areas should be in the foreground

% helplines 
\draw[help lines, step=0.2] (0,0) grid (5,5); 

% first small area
\pgfplothandlerclosedcurve 
\pgfplotstreamstart 
\pgfplotstreampoint{\pgfpoint{0.52cm}{4.1cm}} 
\pgfplotstreampoint{\pgfpoint{0.52cm}{5cm}} 
\pgfplotstreampoint{\pgfpoint{1.6cm}{4.1cm}} 
\pgfplotstreamend \pgfusepath{stroke}

% second big area 
\pgfplothandlerclosedcurve 
\pgfplotstreamstart 
\pgfplotstreampoint{\pgfpoint{2.5cm}{2.3cm}} 
\pgfplotstreampoint{\pgfpoint{3.5cm}{4cm}} 
\pgfplotstreampoint{\pgfpoint{3.95cm}{1.1cm}} 
\pgfplotstreampoint{\pgfpoint{2cm}{1cm}} 
\pgfplotstreamend \pgfusepath{stroke}

%axes
\draw[->] (0,0) -- (5.5,0); 
\draw[->] (0,0) -- (0,5.5);
\foreach \x in {0,0.2,...,5} \draw (\x ,-1pt) -- (\x ,1pt);
\foreach \y in {0,0.2,...,5} \draw (-1pt,\y) -- (1pt,\y);
%text to axes
\node at (0.2,0) [anchor=north] {$\csamp$};
\node at (0,0.2) [anchor=east] {$\frac{1}{\csamp L}$};
\node at (5,0) [anchor=north] {$\csamp L$};
\node at (0,5) [anchor=east] {$\frac{1}{\csamp}$};
%label 
\node at (2.5,-0.5) [anchor=north] {$\tau$};
\node at (-0.5,2.5) [anchor=south, rotate=90] {$\nu$};
\end{tikzpicture}\end{center}
\vspace{-0.5cm}
\caption{
Approximation of a general support region. } %(framed by the black line) 
%can be approximated by rectangles, %(gray region) 
%i.e., by $M_\Gamma$.} %for $L=5$, with $\tau_{\max} = cL$, $\nu_{\max} = 1/c$ and the active cells are given as $\Gamma = \{(0,0),(1,3),(3,2),(4,1)\}$ and $N = |\Gamma| = 4$.}
\label{fig:supex}

\vspace{-0.5cm}

\end{figure}
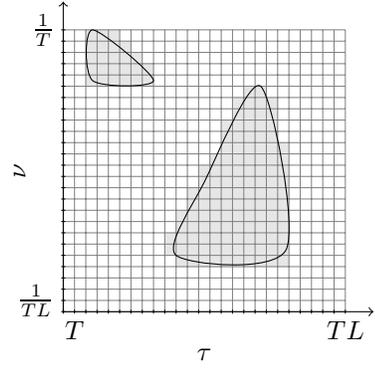

\vspace{-0.09cm}
\subsection{Identifiability \label{sec:identf}}

\newcommand{\opclassgen}[0]{\ensuremath{ \mc Q}}

Let us next formally define  the notion of identifiability of a set of operators $\opclassgen \subseteq \opclass$. The set $\opclassgen$ is said to be identifiable from an input-output measurement, if there exists a probing signal $\prsig \in \opfrom$ such that for each $H\in \opclassgen$, the action of the operator on the probing signal, $H\prsig$, uniquely determines $H$, i.e., if there exists an $\prsig \in \opfrom$ such that 
\vspace{-0.1cm}
\begin{equation}
H_1 \prsig = H_2 \prsig \implies H_1=H_2, \quad \forall \hspace{0.1cm} H_1,H_2 \in \opclassgen.
\label{cond:onetoone}
\end{equation}
Identifiability is hence equivalent to invertibility of the mapping
\begin{equation}
\opclassgen \to \opto: H \mapsto H \prsig
\label{eq:mapping} 
\end{equation}   
induced by the probing signal $\prsig$. 
Invertibility alone is typically not sufficient as one wants to recover $H$ from $H \prsig$ in a numerically stable fashion, i.e., we want small errors in $H \prsig$ to result in small errors in the identified operator. 
This requirement implies that the inverse of the mapping \eqref{eq:mapping} must be continuous (and hence bounded), which finally motivates the following definition. 
\begin{definition}
\label{def:defstableidentification}
We say that $\prsig$ {\it stably} identifies $\opclassgen$ if there exist constants $0 < \alpha \leq \beta < \infty$ such that for all pairs $H_1,H_2 \in \opclassgen$,
\begin{equation}
\alpha \norm[\opclass]{H_1 -H_2} \leq \norm[]{ H_1\prsig - H_2 \prsig  } \leq \beta \norm[\opclass]{ H_1-H_2 }.
\label{eq:cond_stable}
\vspace{-0.065cm}
\end{equation}
Furthermore, we say that $\opclassgen$ is stably identifiable, if there exists an $\prsig \in \opfrom$ such that $\prsig$ stably identifies $\opclassgen$. 
\end{definition}

Note that \eqref{eq:cond_stable} is stated in terms of differences of operators, since $(H_1-H_2)$ is not necessarily contained in the set $\mc Q$. 
The lower bound in \eqref{eq:cond_stable} guarantees that the inverse of \eqref{eq:mapping} exists and is bounded and hence continuous, as desired. 
Proving that $\prsig$ stably identifies $\opclassgen$ essentially amounts to proving that $\alpha>0$ in Definition \ref{def:defstableidentification}.  
The ratio $\beta/\alpha$ provides a measure for the noise sensitivity of the identification process, the closer $\beta/\alpha$  to one, the better. 
\vspace{-0.1cm}
\section{Main Results}
Before stating our main results, 
we define the set of operators with $\sfunc$ supported on a given area $M_\Gamma$:
\begin{equation}
\opclassr  \defeq \{ H \in \mc H : \supp(\sfunc) \subseteq M_{\Gamma} \}.
\end{equation}
In this notation, Kailath \cite{kailath_measurements_1962} discussed the case where $M_\Gamma$ is a (single) rectangle, and Bello \cite{bello_measurement_1969} analyzed  the case where $M_\Gamma$ is an arbitrary region, possibly fragmented and spread over the entire $(\tau,\nu)$-plane. We first recall the key results in \cite{bello_measurement_1969,pfander_measurement_2006} on the identification of $\opclassr$ under the assumption of $M_\Gamma$ known: 
\begin{theorem}[\cite{bello_measurement_1969,pfander_measurement_2006}]
Let $M_\Gamma$ be given. 
The set of operators $\opclassr$ is stably identifiable if and only if $\area(M_\Gamma) \leq 1$.
\label{pr:necessity}
\label{pro:knownsupport}
\end{theorem}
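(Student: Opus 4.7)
The plan is to reduce the continuous identification problem to a finite-dimensional linear algebra problem by exploiting the grid structure of $M_\Gamma$. I would take the probing signal to be a weighted Dirac train $\prsig(t) = \sum_{n \in \mb Z} c_n \delta(t - n\csamp)$ with $\mbf c = (c_0, \ldots, c_{L-1})^T \in \mb C^L$ extended $L$-periodically in $n$. Since $\supp(\sfunc) \subseteq M_\Gamma$, I can expand $\sfunc$ on each active cell $\Un + (k\csamp, m/(\csamp L))$ in a double Fourier series, so that $\sfunc$ is parameterized by Fourier coefficients $\{a_{km}^{pq}\}_{(k,m)\in\Gamma,\,(p,q)\in\mb Z^2}$. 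Substituting the Dirac train and this series into \eqref{eq:hilbertschmidt_s} and performing a Zak-type analysis of $y = H\prsig$, the measurement decouples across $(p,q)$: for each such pair one obtains $\mbf y^{(p,q)} = \mbf G_\Gamma(\mbf c)\, \mbf a^{(p,q)}$, where $\mbf G_\Gamma(\mbf c) \in \mb C^{L \times |\Gamma|}$ has columns indexed by $(k,m) \in \Gamma$ with entries $c_{(n-k)\bmod L}\, e^{j 2 \pi n m / L}$, $n = 0, \ldots, L-1$.

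For the sufficiency direction, stable identifiability reduces to showing that $\mbf G_\Gamma(\mbf c)$ has full column rank with smallest singular value uniformly bounded away from zero. Under $\area(M_\Gamma) \leq 1$, equivalently $|\Gamma| \leq L$, I would invoke the result of Lawrence--Pfander--Walnut, a consequence of Chebotarev's theorem on roots of unity: for $L$ prime and $\mbf c$ with all nonzero entries, every set of at most $L$ columns of the full $L \times L^2$ Gabor matrix is linearly independent. This yields $\alpha > 0$ in Definition \ref{def:defstableidentification}, and the upper bound $\beta < \infty$ follows from a routine norm estimate on the right-hand side of \eqref{eq:hilbertschmidt_s}. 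When $L$ is not prime, one refines the grid to a prime $L' \geq L$, consistent with approximating the support region arbitrarily closely as in Fig.~\ref{fig:supex}.

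For the necessity direction, suppose $\area(M_\Gamma) > 1$, i.e., $|\Gamma| > L$. For any probing signal, the per-$(p,q)$ measurement map has at most $L$ scalar equations in $|\Gamma| > L$ unknowns and therefore a nontrivial null space. A nonzero element of this null space produces a nonzero $H \in \opclassr$ with $H\prsig = 0$, incompatible with \eqref{eq:cond_stable} for any $\alpha > 0$. The main technical obstacle is the full-column-rank and uniform-lower-bound argument for $\mbf G_\Gamma(\mbf c)$, which relies on the nontrivial number-theoretic content of Chebotarev's theorem and is sensitive to the primality of $L$. A secondary but delicate issue is the rigorous handling of the Dirac-delta probe and the convergence of the cellwise Fourier series of $\sfunc$, for which one would work in the functional-analytic setting of \cite{kozek_identification_2005,pfander_measurement_2006}.
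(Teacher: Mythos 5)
Note first that the paper does not prove this theorem itself: it is imported from \cite{bello_measurement_1969,pfander_measurement_2006}, although the machinery of Section \ref{sec:suff} (Dirac-train probe, Zak transform, the system $\mbf z(t,f)=\mbf A_{\mbf c}\mbf s(t,f)$, and Lemma \ref{le:boundedness_classr}) effectively re-derives the sufficiency half. Your sufficiency argument follows exactly that route, and your matrix $\mbf G_\Gamma(\mbf c)$ is the paper's $\mbf A_\Gamma$ up to an index convention. Two caveats. First, the Lawrence--Pfander--Walnut result does not hold for every window with all entries nonzero: for $\mbf c=(1,1,\dots,1)$ all columns of the Gabor matrix sharing the same modulation index $m$ coincide, so $L$ of them are dependent. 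The correct statement is that the full-spark property holds for \emph{almost all} $\mbf c$ (a dense open set of full measure), which is how the paper invokes it. Second, for the \emph{known}-support theorem you are proving, full spark is more than you need: only the single submatrix $\mbf A_\Gamma$ for the given $\Gamma$ must have full column rank, and that already holds for almost all $\mbf c$ for arbitrary $L$, with no appeal to Chebotarev or to primality and hence no need to refine the grid. The full-spark property (every $L$-column submatrix nonsingular) is precisely what is needed for the \emph{unknown}-support Theorem \ref{th:sufficiency}; the paper's footnote in Section \ref{sec:suff} draws exactly this distinction.

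The genuine gap is in your necessity argument. You count $L$ scalar equations per Fourier index $(p,q)$ against $|\Gamma|>L$ unknowns, but that decoupled, $L$-dimensional structure of the measurement is a consequence of choosing $\prsig$ to be the $\csamp L$-periodic Dirac train; it is not a property of an arbitrary $\prsig\in\opfrom$. Non-identifiability must be established for \emph{every} admissible probing signal (Definition \ref{def:defstableidentification} asks only for the existence of some $\prsig$), and for a general $\prsig$ the output $H\prsig$ is an infinite-dimensional object to which naive dimension counting does not apply. Your argument therefore only shows that the particular Dirac-train probe fails when $\area(M_\Gamma)>1$. The actual necessity proof in \cite{kozek_identification_2005,pfander_measurement_2006} is the hard half of the theorem: it shows that for any $\prsig$ the lower bound $\alpha$ in \eqref{eq:cond_stable} must degenerate when the support area exceeds one, via a time-frequency localization and degrees-of-freedom argument that has no analogue in your proposal. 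As written, the ``only if'' direction is unproven.
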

In the following, we consider the set of operators 
\begin{equation}
\opclassbr \defeq \bigcup_{ \area (M_\Gamma) \leq \SSF} \mc H_{M_\Gamma}
\label{eq:blindop}
\end{equation}
 which consists of {\it all} sets $\opclassr$ such that $\area (M_\Gamma) \leq \SSF$. 
 
 Our main results are as follows. 
\begin{theorem}
The set of operators  $\mc X(\SSF)$ is stably identifiable if and only if 
$\SSF \leq \sfrac{1}{2}$.
\label{pr:identifiabilitynecsparse}
\label{th:sufficiency}
\end{theorem}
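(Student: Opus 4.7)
The plan is to prove both directions by reducing the continuous identification problem on $\mc X(\SSF)$ to a finite-dimensional sparse recovery problem governed by a Gabor matrix built from the probing signal.

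For the sufficient direction, the starting observation is a ``difference trick'': although $\mc X(\SSF)$ is not a vector space, for any $H_1, H_2 \in \mc X(\SSF)$ the spreading function of $H_1 - H_2$ is supported on $M_{\Gamma_1 \cup \Gamma_2}$, a set of at most $2 L \SSF \leq L$ cells of the grid. Thus the lower inequality in Definition~\ref{def:defstableidentification} reduces to the uniform bound $\norm{(H_1-H_2) \prsig} \geq \alpha \, \norm[\opclass]{H_1-H_2}$ over all differences whose spreading functions occupy at most $L$ cells. I would take as probing signal a weighted Dirac train $\prsig(t) = \sum_{n=0}^{L-1} c_n \delta(t - n\csamp)$, expand $\sfunc$ on the fundamental cell $\Un$ in a Fourier basis, and sample the output $H \prsig$ at a matched rate; a direct calculation produces, mode by mode, a measurement of the form $\mbf G_c \mbf v$, where $\mbf G_c \in \mathbb{C}^{L \times L^2}$ is the Gabor matrix whose columns are the $L^2$ time--frequency shifts of $c$ and $\mbf v$ stacks the per-cell coefficients of the current mode. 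Invoking the Lawrence--Pfander--Walnut type result that for generic (indeed explicitly constructible) $c \in \mathbb{C}^L$ the matrix $\mbf G_c$ has full spark $L+1$, every $L$ columns of $\mbf G_c$ are linearly independent, so the at-most-$L$ sparsity of $\mbf v$ forces $\mbf v = 0$. Finite-dimensional compactness of the unit sphere of $L$-cell-supported vectors (a finite union of compact sets) then promotes injectivity to a uniform $\alpha > 0$, while the upper bound $\beta < \infty$ follows from Cauchy--Schwarz applied to the integral representation \eqref{eq:hilbertschmidt_s} of $H \prsig$.

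The main obstacle in the sufficient direction is that the spreading function within an active cell is an arbitrary $L^2$ function rather than a scalar, whereas the Gabor-matrix argument is finite-dimensional. I would handle this by expanding $\sfunc$ on each cell in a countable orthonormal basis, arranging that the induced measurement decouples into independent copies of the same scalar Gabor problem, one per mode; the spark property then yields norm equivalence mode by mode with a common constant, and summing in $\ell^2$ over modes reconstitutes \eqref{eq:cond_stable} for the full operator norm $\norm[\opclass]{\cdot}$.

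For the necessary direction, I would argue by a dimension count. For any $\SSF > 1/2$ fix grid parameters $\csamp, L$ with $L$ chosen so that $\lceil (L+1)/2 \rceil \leq L \SSF$. Given an arbitrary probing signal $\prsig$, the same reduction yields a Gabor-type measurement matrix of size $L \times L^2$, whose spark is at most $L+1$ simply because it has only $L$ rows. Hence there is a nonzero $\mbf v \in \ker \mbf G_c$ with $|\supp \mbf v| \leq L+1$; splitting $\mbf v = \mbf v_1 - \mbf v_2$ into vectors with disjoint supports each of cardinality at most $\lceil (L+1)/2 \rceil \leq L \SSF$, and constructing operators $H_i \in \mc X(\SSF)$ whose spreading functions realize the $\mbf v_i$ (supported on the corresponding cells), produces $H_1 \neq H_2$ with $H_1 \prsig = H_2 \prsig$, contradicting \eqref{cond:onetoone}. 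Since $\prsig$ was arbitrary, $\mc X(\SSF)$ cannot be stably identified when $\SSF > 1/2$.
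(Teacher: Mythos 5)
Your sufficiency argument is essentially the paper's: a weighted Dirac train, a reduction (via the difference trick and a per-mode decoupling over the fundamental cell) to an $L\times L^2$ Gabor-type system, the Lawrence--Pfander--Walnut full-spark property for generic weights $\mbf c$, and finiteness of the set of admissible supports to get a uniform $\alpha>0$. The paper carries out the decoupling with the Zak transform rather than a per-cell Fourier expansion, and the uniformity over modes is exactly its Lemma~\ref{le:boundedness_classr}; these are the same idea. One detail to fix: the paper's probing signal is the \emph{periodic} train $\sum_{k\in\mathbb Z}c_k\delta(t+k\csamp)$ with $c_k=c_{k+L}$. The periodicity is what makes the Zak-transform computation collapse the doubly infinite sum into the finite $L\times L^2$ matrix $\mbf A_{\mbf c}$; with a finite train of $L$ impulses the aliasing structure is different and you do not get this clean system, so you should keep the periodization.

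The necessity direction has a genuine gap. You assert that for an \emph{arbitrary} probing signal $\prsig$ ``the same reduction yields a Gabor-type measurement matrix of size $L\times L^2$,'' and then run a spark/dimension count on that matrix. But the reduction to a finite matrix acting mode-by-mode is a consequence of the specific Dirac-train structure (spacing $\csamp$, period $\csamp L$); for a general $\prsig\in\opfrom$ the map $H\mapsto H\prsig$ restricted to cell-supported spreading functions does not decouple into finitely many scalar measurements per mode, and ``only $L$ rows'' is not a meaningful statement about it. So your argument only rules out probing signals of the special form, not all of them, and necessity requires the latter. The paper closes this by a different route: its Lemma~\ref{pr:blind_identification} shows that stably identifying $\opclassbr$ forces stable identification of every difference set $\mc H_{M_\Phi\cup M_\Theta}$ with $\area(M_\Phi),\area(M_\Theta)\le\SSF$; choosing $M_\Phi,M_\Theta$ disjoint with $\area(M_\Phi)=\area(M_\Theta)=\SSF>\sfrac{1}{2}$ gives a \emph{known} support region of area exceeding $1$, and the Bello/Pfander--Walnut necessity result (Theorem~\ref{pr:necessity}), which holds for arbitrary probing signals, then rules out stable identifiability. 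To repair your proof you should either invoke that known-support necessity theorem as a black box, or reprove it, which is a nontrivial functional-analytic statement and not a finite dimension count.
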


Note that Theorem \ref{th:sufficiency} applies to the union of \emph{all} sets of operators with spreading function supported on a region with area no larger than $\SSF$, and, in particular, does not need the support region of $\sfunc$ to be known in advance. The main implication of Theorem \ref{th:sufficiency} is that the penalty for not knowing the support region of $\sfunc$ in advance is a factor of two in the support set size of the spreading function. 
We can eliminate this penalty by relaxing the identification requirement to \emph{almost all} $H\in \opclassbr$.

\begin{theorem}
Almost all $H \in \opclassbr$ can be stably identified if $\SSF < 1$.  
\label{th:almost_all}
\end{theorem}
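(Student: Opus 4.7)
The plan is to reduce blind operator identification to a finite-dimensional sparse recovery problem governed by a structured Gabor matrix, and then to exploit the genericity of the spreading-function coefficients via a dimension-counting argument.

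Following the strategy underlying Theorem~\ref{th:sufficiency}, I would probe the operator with a weighted Dirac train $\prsig(t) = \sum_{n=0}^{L-1} c_n \delta(t - n\csamp)$. A Zak-transform / periodization argument reduces \eqref{eq:ltvsys}, for spreading functions supported as in \eqref{eq:suppgridcont}, to a finite linear equation $\mb y = \mb A \mb x$, where $\mb x \in \mb C^{L^2}$ collects the (discretized) spreading-function coefficients indexed by $\Sigma$ (so $\supp(\mb x) \subseteq \Gamma$), and $\mb A \in \mb C^{L \times L^2}$ is the Gabor matrix whose columns are the time-frequency translates $\{M_{m} T_{k} \mb c\}_{(k,m) \in \Sigma}$ of the window $\mb c = (c_0, \ldots, c_{L-1})$. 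I would take $L$ prime (refining the grid if necessary; this preserves $\SSF$) and invoke a generic-window result of Lawrence-Pfander-Walnut type: for almost every $\mb c$, every collection of $L$ columns of $\mb A$ is linearly independent. Fix such a $\mb c$ and set $k \defeq \lfloor \SSF L \rfloor$; since $\SSF < 1$ we have $k < L$.

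Next, for each fixed support $\Gamma_1 \subseteq \Sigma$ with $|\Gamma_1| \leq k$, I would show that the set of coefficient vectors $\mb x_1 \in \mb C^{\Gamma_1}$ for which some $\mb x_2 \neq \mb x_1$ supported on some $\Gamma_2$ with $|\Gamma_2| \leq k$ satisfies $\mb A \mb x_1 = \mb A \mb x_2$ has Lebesgue measure zero. Since only finitely many $\Gamma_2$ are admissible, it suffices to fix $\Gamma_2$. Setting $\Gamma = \Gamma_1 \cup \Gamma_2$ and $\mb w = \mb x_1 - \mb x_2 \in \ker(\mb A_{\Gamma})$: if $|\Gamma| \leq L$, the generic-window property forces $\mb w = 0$, so no confusion arises; if $|\Gamma| > L$, then $\dim \ker(\mb A_\Gamma) = |\Gamma| - L$, and since $\ker(\mb A_{\Gamma_2}) = \{0\}$, the projection of $\ker(\mb A_\Gamma)$ onto the coordinates $\Gamma_1 \setminus \Gamma_2$ is injective. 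Hence the confusable $\mb x_1$ lie in a set of dimension at most $(|\Gamma| - L) + |\Gamma_1 \cap \Gamma_2| = |\Gamma_1| + |\Gamma_2| - L < |\Gamma_1|$, where the strict inequality uses $|\Gamma_2| \leq k < L$. This is a proper affine subspace of $\mb C^{\Gamma_1}$ and thus has measure zero; the finite union over $\Gamma_2$ remains null. Stability on the resulting good set reduces, by finite-dimensionality, to the positivity of the smallest singular values of the (finitely many) injective submatrices $\mb A_{\Gamma_1 \cup \Gamma_2}$ with $|\Gamma_1 \cup \Gamma_2| \leq L$.

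The main obstacle is establishing the Chebotar\"ev-type ``every $L$ columns linearly independent'' property of the Gabor matrix $\mb A$ when $\mb c$ is generic and $L$ is prime; this is a nontrivial algebraic fact that serves as the linchpin of the dimension argument. A secondary difficulty is making the first-step reduction rigorous in the distributional setting of the Dirac-train probing signal, which requires care to handle the generalized-function aspects alluded to in the footnote of Section~\ref{sec:prform}.
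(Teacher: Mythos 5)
Your overall architecture---Dirac-train probing, Zak-transform reduction to the underdetermined Gabor system $\mbf z(t,f)=\mbf A_{\mbf c}\,\mbf s(t,f)$ of \eqref{eq:sysofeq}, and the Lawrence--Pfander--Walnut property that every $L$ columns of $\mbf A_{\mbf c}$ are linearly independent for generic $\mbf c$---coincides with the paper's. Where you genuinely diverge is the second half: the paper recovers the support via a MUSIC-type argument on the correlation matrix $\mbf Z=\int_{\Un}\mbf z\,\herm{\mbf z}\,d(t,f)=\mbf A_\Gamma\mbf S_\Gamma\herm{\mbf A}_\Gamma$, using that for almost all $H$ the functions $s_{k,m}$, $(k,m)\in\Gamma$, are linearly independent on $\Un$ (equivalently, $\mbf S_\Gamma$ has full rank, Lemma~\ref{pr:fullrankz}) together with $|\Gamma|\le L-1$; you instead run a dimension count over pairs of candidate supports $(\Gamma_1,\Gamma_2)$. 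Your count is correct, and it is worth observing that your confusability condition---$\mbf s_{\Gamma_1}(t,f)$ restricted to $\Gamma_1\setminus\Gamma_2$ lying in a proper subspace for (almost) every $(t,f)$---is precisely a nontrivial linear relation among the $s_{k,m}$, $(k,m)\in\Gamma_1$, so the two routes end up excluding essentially the same exceptional set. The paper's route additionally yields a concrete recovery algorithm (the zero columns of $\herm{\mbf U}_n\mbf A_{\mbf c}$ identify $\Gamma$); yours gives a more transparent explanation of why the factor-of-two penalty disappears generically.

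There is, however, one gap you must close. Your reduction to ``$\mbf y=\mbf A\mbf x$ with $\mbf x\in\mb C^{L^2}$'' collapses the problem to a single finite linear system, whereas the unknown is the vector-valued \emph{function} $\mbf s(t,f)$, $(t,f)\in\Un$: \eqref{eq:sysofeq} is a continuum of systems sharing a common support. As a result, your measure-zero claim is stated on the wrong space---Lebesgue measure on $\mb C^{\Gamma_1}$ says nothing directly about ``almost all $H\in\opclassbr$,'' which is a genericity statement on the infinite-dimensional space of spreading functions supported on $M_{\Gamma_1}$. The repair is available but has to be made explicit: confusability with support $\Gamma_2$ forces the pointwise subspace inclusion to hold for a.e.\ $(t,f)$, which is equivalent to the existence of a single nonzero $\mbf a$ with $\herm{\mbf a}\mbf s_{\Gamma_1}(t,f)=0$ a.e.\ on $\Un$; the set of such $H$ is a closed proper subspace of $\mc H_{M_{\Gamma_1}}$, and the exceptional set is the finite union of these over $\Gamma_2$ (this is exactly the paper's Condition~2). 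Relatedly, your closing stability remark only covers pairs with $|\Gamma_1\cup\Gamma_2|\le L$; when $|\Gamma_1\cup\Gamma_2|>L$ the matrix $\mbf A_{\Gamma_1\cup\Gamma_2}$ is not injective and no uniform $\alpha>0$ over the whole good set exists, so stability must be interpreted, as the paper implicitly does, as stability of the reconstruction once the (almost surely correctly recovered) support has been fixed.
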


The factor of two penalty in Theorem \ref{th:sufficiency} has the same roots as the factor of two penalty in spectrum-blind sampling \cite{feng_spectrum-blind_1996,bresler_spectrum-blind_2008,feng_universal_1997,mishali_blind_2009}, sparse signal recovery \cite{donoho_optimally_2003}, and in the recovery of signals that lie in a union of subspaces \cite{lu_theory_2008}. 
It was recognized before---in the context of spectrum-blind sampling---that the factor of two penalty can be eliminated by relaxing the recovery requirement to \emph{almost all} signals \cite{feng_spectrum-blind_1996,bresler_spectrum-blind_2008,feng_universal_1997}.

\section{Necessity in Theorem \ref{pr:identifiabilitynecsparse} \label{sec:necessary}}

To prove necessity in Theorem \ref{pr:identifiabilitynecsparse} we start with the following lemma which states an equivalent condition on stable identifiability of $\opclassbr$. This condition is often easier to verify than that in Definition \ref{def:defstableidentification}. 

\begin{lemma}
$\prsig$ stably identifies $\opclassbr$ if and only if it stably identifies {\it all} sets
\[
\mc H_{M_\Phi \cup M_\Theta} \defeq \{H: H = H_1 - H_2 , H_1 \in \mc H_{M_\Phi}, H_2 \in \mc H_{M_\Theta} \}
\]
 with $\area(M_\Phi) \leq \SSF$ and $\area(M_\Theta) \leq \SSF$ where $\Phi, \Theta \in \Sigma$.  
\label{pr:blind_identification}
\end{lemma}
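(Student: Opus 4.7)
The plan is to unpack the stable-identification condition and exploit the fact that $\Sigma$ is finite, so the collection of admissible pairs $(\Phi,\Theta)$ is finite as well. A preliminary observation that underpins both directions: for any fixed $\Gamma$, the set $\mc H_{M_\Gamma}$ is a linear subspace of $\opclass$, and the set $\mc H_{M_\Phi\cup M_\Theta}$ defined in the lemma is likewise a linear subspace (it is closed under linear combinations because $\mc H_{M_\Phi}$ and $\mc H_{M_\Theta}$ are). Consequently, for any such linear subspace $\mc V$, the stable-identification condition \eqref{eq:cond_stable} is equivalent, by substituting $H = H_1 - H_2$, to the existence of constants $0<\alpha\leq\beta<\infty$ satisfying $\alpha\norm[\opclass]{H}\leq\norm{H\prsig}\leq\beta\norm[\opclass]{H}$ for every $H\in\mc V$.

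For the forward direction, I assume $\prsig$ stably identifies $\opclassbr$ with constants $\alpha,\beta$, and fix any pair $(\Phi,\Theta)$ with $\area(M_\Phi),\area(M_\Theta)\leq\SSF$. Every $G\in\mc H_{M_\Phi\cup M_\Theta}$ is, by definition, of the form $G = H_1 - H_2$ with $H_1\in\mc H_{M_\Phi}\subseteq\opclassbr$ and $H_2\in\mc H_{M_\Theta}\subseteq\opclassbr$, so applying the hypothesis to $(H_1,H_2)$ immediately yields $\alpha\norm[\opclass]{G}\leq\norm{G\prsig}\leq\beta\norm[\opclass]{G}$ with the \emph{same} constants $\alpha,\beta$. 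Hence $\prsig$ stably identifies $\mc H_{M_\Phi\cup M_\Theta}$.

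For the converse, suppose $\prsig$ stably identifies $\mc H_{M_\Phi\cup M_\Theta}$ for every admissible pair, with (a priori pair-dependent) constants $\alpha_{\Phi,\Theta}$ and $\beta_{\Phi,\Theta}$. Given any $H_1,H_2\in\opclassbr$, I choose $\Phi,\Theta\subseteq\Sigma$ with $\area(M_\Phi),\area(M_\Theta)\leq\SSF$ such that $H_1\in\mc H_{M_\Phi}$ and $H_2\in\mc H_{M_\Theta}$; then $H_1-H_2\in\mc H_{M_\Phi\cup M_\Theta}$, and the identification bound for this set applies to the pair $(H_1,H_2)$. Because $\Sigma$ is finite, only finitely many admissible pairs $(\Phi,\Theta)$ exist, and setting $\alpha\defeq\min_{\Phi,\Theta}\alpha_{\Phi,\Theta}>0$, $\beta\defeq\max_{\Phi,\Theta}\beta_{\Phi,\Theta}<\infty$ furnishes the required uniform constants for $\opclassbr$.

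The argument is essentially bookkeeping; there is no serious analytic obstacle. The one point worth flagging is that stable identification of the union $\opclassbr$ demands constants that are uniform across \emph{all} support configurations---it is the finiteness of $\Sigma$, and hence of its power set, that lets the pair-indexed constants $(\alpha_{\Phi,\Theta},\beta_{\Phi,\Theta})$ be collapsed to a single $(\alpha,\beta)$ via a min/max. Without the discretization of the support region implicit in \eqref{eq:suppgridcont}, one would have to worry about taking an infimum of identification constants over an uncountable family of support patterns, which could in principle vanish.
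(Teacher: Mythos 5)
Your proof is correct and takes essentially the same route as the paper's one-line argument: both rest on the observation that the set of differences $H_1-H_2$ with $H_1,H_2\in\opclassbr$ coincides with the union of the subspaces $\mc H_{M_\Phi\cup M_\Theta}$ over admissible pairs $(\Phi,\Theta)$, so the two-sided bound for one set is a restatement of the bound for the other. Your additional step of collapsing the pair-dependent constants $(\alpha_{\Phi,\Theta},\beta_{\Phi,\Theta})$ to uniform ones via the finiteness of $\Sigma$ is a detail the paper's proof leaves implicit (it uses a single $\alpha,\beta$ throughout), and flagging it is a reasonable clarification rather than a departure.
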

\begin{IEEEproof}[Proof of Lemma \ref{pr:blind_identification}]
Follows immediately by invoking the definition of stable identifiability and noting that  
$\alpha \norm[\opclass]{H_1 -H_2} \leq \norm[]{ (H_1- H_2 )\prsig  } \leq \beta \norm[\opclass]{ H_1-H_2 }, \forall H_1,H_2 \in \opclassbr$ if and only if  $\alpha \norm[\opclass]{H } \leq \norm[]{ H \prsig  } \leq \beta \norm[\opclass]{ H }$   for all $H \in \mc  H_{M_\Phi \cup M_\Theta}$ and  for all $M_{\Phi}, M_{\Theta},$ with $\area(M_{\Phi}) \leq \SSF$ and $\area(M_{\Theta}) \leq  \SSF, \; \Phi,\Theta \in \Sigma$. 
\end{IEEEproof}

Before formally proving necessity in Theorem  \ref{th:sufficiency}, we comment on an important aspect of the difference between known and unknown spreading function support region. If the support region is known and given, by say $M_\Gamma$, and we consider the set $\opclassr$, it follows that the difference $H_1-H_2$ for $H_1,H_2 \in \opclassr$ satisfies $(H_1-H_2) \in \opclassr$. If we, however, consider the set $\opclassbr$, we have that $H_1,H_2 \in \opclassbr$ does not imply $(H_1-H_2) \in \opclassbr$ in general. To see this, simply take $H_1,H_2 \in \opclassbr$ such that the support regions of $H_1$ and $H_2$ have area $\SSF$ and are disjoint. We do, however, have that $H_1-H_2 \in \mathcal X(2 \SSF), \; \forall H_1, H_2 \in \opclassbr$. 
This observation lies at the heart of the factor of two penalty in $\SSF$ as quantified by Theorem \ref{th:sufficiency}.

\begin{IEEEproof}[Necessity in Theorem  \ref{th:sufficiency}]
Follows by taking $\SSF > \sfrac{1}{2}$ and noting that we can find 
$M_\Phi,M_\Theta$ with $\area(M_\Phi)=\area(M_\Theta)=\SSF$ and $M_\Phi \cap M_\Theta = \emptyset$. 
This implies $\area ( M_\Phi \cup M_\Theta  ) > 1$ and hence application of Theorem \ref{pr:necessity} to the set $\mc H_{M_\Phi \cup M_\Theta}$ establishes that $\mc H_{M_\Phi \cup M_\Theta}$ is not stably identifiable. By Lemma \ref{pr:blind_identification} this then implies that $\opclassbr$ is not stably identifiable. 
\end{IEEEproof}

%%%%%%%%%%%%%%%%%%%%%%%%%%%%%%%%

\section{Sufficiency in Theorem \ref{th:sufficiency} \label{sec:suff}}

We prove sufficiency in Theorem \ref{th:sufficiency} by finding a probing signal that stably identifies $\opclassbr$. 
Concretely, we choose a weighted $\csamp L$-periodic train of Dirac impulses 
\begin{equation}
\prsig(t) = \sum_{k \in \mb Z}  c_k  \delta(t+   k\csamp ), \quad c_k = c_{k+L}, \; \forall k \in \mb Z
\label{eq:probingsignal}
\vspace{-0.1cm}
\end{equation}
as probing signal. The specific choice of the coefficients $\mbf c = \{c_0,...,c_{L-1}\}$ will turn out to be crucial and will be discussed later. 
The main idea of our proof is to 1)   
reduce the identification problem to that of solving a linear system of $L$ equations with $L^2$ unknowns, and 2) to apply Lemma \ref{pr:blind_identification} to show that a unique solution of this underdetermined system of equations exists whenever $\SSF \leq \sfrac{1}{2}$ (and $\mbf c$ is chosen appropriately). 

We start by computing the response of $H$ to $\prsig(t)$ in \eqref{eq:probingsignal}: 
\begin{align}
\vspace{-0.1cm}
y(t) 	&= (H \prsig) (t)  
	        =  \sum_{k \in \mb Z}  c_k \int_\nu   \sfunc(t+ k \csamp  ,\nu)  e^{j 2 \pi \nu t} d\nu. \label{eq:dirac}
\vspace{-0.1cm}	        
\end{align}
Next, define the Zak transform \cite{janssen_zak_1988} (with parameter $\csamp L$) of the signal $u(t)$ as 
\begin{equation}
\vspace{-0.2cm}
\mc Z_{u}(t ,f)   \nonumber \defeq \sum_{m \in \mb Z} u(t - m \csamp L ) e^{j2\pi  m \csamp L f} 
\vspace{-0.0cm}
\end{equation}
for $(t,f) \in  [0, \csamp L)  \times [0,1/\csamp L)$. 
The Zak transform of $y(t)$ in \eqref{eq:dirac} is given by
\vspace{-0.1cm}
\begin{align}
&\mc Z_{y} (t,f) =  \nonumber \\
&=\!\! \sum_{k,m \in \mb Z} \!\! c_k  \!\! \int_\nu \!  \sfunc(t - m \csamp L+k \csamp ,\nu)  e^{j 2 \pi \nu (t -m \csamp L)} d\nu \hspace{0.045cm} e^{j2\pi m \csamp L  f}  \nonumber \\
&= \! \! \sum_{k',m \in \mb Z}  c_{k'}   \int_\nu   \sfunc(t+ k' \csamp ,\nu)  e^{j 2 \pi \nu t}   e^{-j 2 \pi \nu m \csamp L} d\nu \hspace{0.01cm} \,e^{j2\pi m\csamp L f} \nonumber \\
&= \sum_{k \in \mb Z}  c_{k} \frac{1}{\csamp L} \sum_{m \in \mb Z} \sfunc \!  \left(t+ k \csamp, f + \frac{m}{\csamp L} \right)  e^{j 2 \pi t \left( f + \frac{m}{\csamp L} \right) } \nonumber 
\end{align}
where we used the substitution $k'= k-mL$ and the last step follows from the Poisson summation formula. 
Next, we substitute $t = t'+ p\csamp$ with $p \in \{0,...,L-1\}$ and $t' \in [0,\csamp)$. This amounts to splitting the fundamental rectangle $[0, \csamp L)  \times [0,1/\csamp L)$ of the Zak transform into $L$ ``cells'' $\Un$, where $\Un$  was defined in Section \ref{sec:prform}, and  yields
\begin{align*}
&z_p(t',f)  \defeq  \mc Z_{y}( t'+ p\csamp ,f), \quad (t',f) \in \Un,\quad p = 0,...,L-1 \\
&= \!\! \sum_{k,m \in \mb Z}  \! \frac{c_{k} }{\csamp L}   \sfunc \! \left(t'+p \csamp  + k \csamp  , f + \frac{m}{\csamp L}  \right) e^{j 2 \pi (t'+ p \csamp ) \left(f + \frac{m}{\csamp L} \right) } \\
&=  \sum_{k' \in \mb Z}  \frac{c_{k'-p} }{\csamp L}  \sum_{m \in \mb Z}  \sfunc \! \left(t'+ k' \csamp , f + \frac{m}{\csamp L}  \right) e^{j 2 \pi (t'+p \csamp) \left(f + \frac{m}{\csamp L} \right) } \\
&= \sum_{k =0}^{L-1} \frac{c_{k-p} }{\csamp L}   \sum_{m = 0}^{L-1} \sfunc \! \left( t'+ k \csamp, f + \frac{m}{\csamp L}  \right) e^{j 2 \pi (t'+ p \csamp) \left(f + \frac{m}{\csamp L}  \right) }
\end{align*}
where we used the substitution $k'=p+k$ and the last step is a consequence of $\sfunc(\tau , \nu ) = 0$ for $(\tau,\nu) \notin \left[0, \csamp L \right) \times \left[0, 1/\csamp \right)$, by definition. 
We can now rewrite the last equation in vector-matrix form by defining the column vectors $\mbf z(t,f)$ and $\mbf s(t,f)$: 
\[
[\mbf z(t,f)]_p \defeq  z_{p}(t,f)e^{-j2\pi p \csamp f} , \quad p=0,...,L-1
\]
and 
$\mbf s(t,f)  \defeq [s_{0,0}(t,f),s_{0,1}(t,f),\allowbreak  ...,\allowbreak  s_{0,L-1}(t,f), \allowbreak s_{1,0}(t,f),..., s_{L-1,L-1}(t,f)]^T$
with  
 \begin{equation}
 s_{k,m}(t,f) \defeq \sfunc \! \left( t + k  \csamp , f +   \frac{m}{\csamp L}\right)  e^{j 2 \pi \left( f + \frac{m}{\csamp L} \right)t}. 
 \label{def:skm}
\end{equation}
It is easily seen that the vector $\mbf s(t,f)$, $(t,f) \in \Un$, fully characterizes the spreading function. 
With all definitions in place, we finally obtain
\vspace{-0.2cm}
\begin{equation}
\mbf z(t,f) =   \mbf A_{\mbf c}  \mbf s(t,f), \quad  (t,f) \in  U
\label{eq:sysofeq}
\vspace{-0.2cm}
\end{equation}
with the $L\times L^2$ matrix
\vspace{-0.2cm}
\[
\mbf A_{\mbf c} \defeq  [\mbf A_{\mbf c, 0} | \; ...  \; | \mbf A_{\mbf c, L-1}  ], \quad
\mbf A_{\mbf c, k} \defeq  \frac{1}{ \csamp L }  \mbf C_{\mbf c, k} \herm{\mbf F}
\vspace{-0.2cm}
\]
where 
$[\mbf F]_{p,m} =e^{-j2\pi \frac{pm}{L}}, \; p,m=0,...,L-1$, and $\mbf C_{\mbf c, k}$ is the diagonal matrix with diagonal entries $\{c_{k}, c_{k-1}, ..., c_{k+1}\}$. 

The proof will be effected by applying Lemma \ref{pr:blind_identification}. 
Concretely we will prove stable identifiability of $\mc H_{M_\Phi \cup M_\Theta}$ for all pairs $M_\Phi,M_\Theta$ with 
$\area(M_\Phi) \leq \sfrac{1}{2}$ and $\area(M_\Theta) \leq \sfrac{1}{2}$. By setting $M_\Gamma = M_\Phi \cup M_\Theta$, this is equivalent to proving stable identifiability of $\mc H_{M_\Gamma}$ for all $M_\Gamma$ with $\area(M_\Gamma)\leq 1$. We therefore consider $H\in \opclassr$, and note that, by definition $s_{k,m}(t,f) = 0, \; \forall  (k,m) \notin \Gamma$.  Denote the vector obtained from $\mbf s(t,f)$ by selecting the entries corresponding to the active cells $\Gamma$ by $\mbf s_\Gamma(t,f)$ and let $\mbf A_\Gamma$ be the matrix containing the columns of  $\mbf A_{\mbf c}$ that correspond to these cells. Then \eqref{eq:sysofeq} becomes\footnote{
Pfander and Walnut \cite{pfander_measurement_2006} used the probing signal \eqref{eq:probingsignal} to prove that, for known spreading function support region, $\SSF \leq 1$ is sufficient for stable identifiability. The  crucial difference between \cite{pfander_measurement_2006} and our setup is that we need {\it each} submatrix of $\mbf A_{\mbf c}$ of $L$ columns to have full rank, as we do not assume knowledge of the support region. 
}
\vspace{-0.2cm}
\begin{equation}
\mbf z(t,f) =  \mbf A_{\Gamma}  \mbf s_\Gamma(t,f), \quad (t,f) \in  U.
\label{eq:sysofeq_rest}
\vspace{-0.2cm}
\end{equation}
Next, we formally relate \eqref{eq:sysofeq_rest} to the definition of stable identifiability through the following lemma, whose proof is given in the appendix. 
\begin{lemma}
Let $\prsig$ be given by \eqref{eq:probingsignal}. Then, the bounds $\alpha, \beta$ in \eqref{eq:cond_stable} for the set of operators $\opclassr$ are given as
\vspace{-0.1cm}
\begin{equation}
 \alpha_\Gamma = \sqrt{TL} \inf_{\norm[\ell_2]{\mbf v} = 1}( \mbf A_{\Gamma} \mbf v), \; \beta_\Gamma = \sqrt{TL} \sup_{\norm[\ell_2]{\mbf v} = 1}( \mbf A_{\Gamma} \mbf v).
\label{eq:stabbounds}
\end{equation}
\vspace{-0.1cm}
\label{le:boundedness_classr}
\end{lemma}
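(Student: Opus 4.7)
\begin{IEEEproof}[Proof plan for Lemma \ref{le:boundedness_classr}]
The plan is to translate the stability inequalities in \eqref{eq:cond_stable} into statements about the extreme singular values of $\mbf A_\Gamma$ by (i) identifying $\norm[\opclass]{H}^2$ with an integral of $\norm[\ell_2]{\mbf s_\Gamma(t,f)}^2$ over the fundamental cell $\Un$, and (ii) identifying $\norm{H\prsig}^2$ with the corresponding integral of $\norm[\ell_2]{\mbf z(t,f)}^2$ by invoking the isometry property of the Zak transform. Relation \eqref{eq:sysofeq_rest} then couples the two sides, and elementary singular-value bounds close the argument. Since $\opclassr$ is a linear space, it suffices to work with a single $H \in \opclassr$ rather than with differences $H_1 - H_2$.

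First I would compute $\norm[\opclass]{H}^2 = \norm{\sfunc}^2$ and tile the support $M_\Gamma$ by the $|\Gamma|$ translates of $\Un$ indexed by $\Gamma$. Since the exponential factor in the definition \eqref{def:skm} of $s_{k,m}(t,f)$ is unimodular, a change of variables yields
\[
\norm[\opclass]{H}^2 = \sum_{(k,m)\in\Gamma}\int_\Un |s_{k,m}(t,f)|^2 d(t,f) = \int_\Un \norm[\ell_2]{\mbf s_\Gamma(t,f)}^2 d(t,f).
\]
Second, I would invoke the standard isometry of the Zak transform with parameter $\csamp L$: a direct orthogonality calculation (using $\int_0^{1/(\csamp L)} e^{j2\pi(m-m')\csamp L f} df = \delta_{m,m'}/(\csamp L)$) gives $\norm{y}^2 = \csamp L \,\norm{\mc Z_y}^2_{[0,\csamp L)\times[0,1/(\csamp L))}$. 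Splitting the fundamental rectangle of the Zak transform into the $L$ sub-cells $\Un + (p\csamp, 0)$, $p = 0,\dots,L-1$, and noting that the phase factors $e^{-j2\pi p\csamp f}$ appearing in the definition of $[\mbf z(t,f)]_p$ have unit modulus, this becomes
\[
\norm{H\prsig}^2 = \csamp L \int_\Un \norm[\ell_2]{\mbf z(t,f)}^2 d(t,f).
\]

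Third, substituting $\mbf z(t,f) = \mbf A_\Gamma \mbf s_\Gamma(t,f)$ from \eqref{eq:sysofeq_rest} and applying pointwise in $(t,f)$ the elementary inequality $\sigma_{\min}^2(\mbf A_\Gamma)\norm[\ell_2]{\mbf v}^2 \leq \norm[\ell_2]{\mbf A_\Gamma \mbf v}^2 \leq \sigma_{\max}^2(\mbf A_\Gamma)\norm[\ell_2]{\mbf v}^2$ with $\mbf v = \mbf s_\Gamma(t,f)$, then integrating over $\Un$ and combining with the first step yields the sandwich
\[
\csamp L\,\sigma_{\min}^2(\mbf A_\Gamma)\,\norm[\opclass]{H}^2 \leq \norm{H\prsig}^2 \leq \csamp L\,\sigma_{\max}^2(\mbf A_\Gamma)\,\norm[\opclass]{H}^2,
\]
which is exactly the claimed formula with $\alpha_\Gamma = \sqrt{\csamp L}\,\sigma_{\min}(\mbf A_\Gamma)$ and $\beta_\Gamma = \sqrt{\csamp L}\,\sigma_{\max}(\mbf A_\Gamma)$. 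Sharpness of both constants is verified by choosing $\sfunc$ so that $\mbf s_\Gamma(t,f)$ is, up to a $(t,f)$-independent scaling, aligned with the right-singular vector of $\mbf A_\Gamma$ attaining the corresponding extremal singular value; the definitions of $\mbf s(t,f)$ and $M_\Gamma$ make such a choice admissible. The main obstacle I expect is purely the bookkeeping of the normalization constants arising from the unnormalized Zak transform, the splitting of its fundamental rectangle into $L$ cells, and the unimodular phases absorbed into $\mbf z$ and $\mbf s_\Gamma$; these must combine to precisely the factor $\csamp L$ in the statement.
\end{IEEEproof}
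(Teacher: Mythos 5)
Your proposal is correct and follows essentially the same route as the paper: establish $\norm[\opclass]{H}^2=\int_{\Un}\norm[\ell_2]{\mbf s_\Gamma(t,f)}^2\,d(t,f)$, use the unitarity of the Zak transform (with the splitting of its fundamental rectangle into the $L$ cells) to get $\norm{H\prsig}^2=\csamp L\int_{\Un}\norm[\ell_2]{\mbf z(t,f)}^2\,d(t,f)$, and then apply the pointwise singular-value sandwich to \eqref{eq:sysofeq_rest} and integrate. The normalization bookkeeping works out exactly as you anticipate, and your additional remark on sharpness of the constants is a small bonus the paper leaves implicit.
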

\vspace{-0.2cm}

The proof of sufficiency in Theorem \ref{th:sufficiency} is now completed by showing that for all $M_\Gamma$ with $\area(M_\Gamma)\leq 1$, $\opclassr$ is stably identifiable, i.e., $\alpha_\Gamma >0$. 
This amounts to proving that $\mbf A_\Gamma$ has full rank for all $M_\Gamma$ such that $\area(M_\Gamma)\leq 1$, i.e., for all $ \Gamma \in \Sigma$ such that $|\Gamma| \leq L$.  
What comes to our rescue here is a result in \cite{lawrence_linear_2005} which states that for almost all $\mbf c$, each $L\times L$ submatrix of $\mbf A_{\mbf c}$ has full rank. Hence, there exists a $\mbf c$ such that $\alpha_\Gamma > 0$ for all $M_\Gamma$ with $\area(M_\Gamma)\leq 1$. 
In the remainder of the paper, $\mbf c$ is chosen such that each $L\times L$ submatrix of $\mtx{A}_{\ve{c}}$ has full rank.

\section{Recovering the Spreading Function \label{sec:reconstruct} }
We next discuss an algorithm for the recovery of operators $H\in \opclassbr$ from the operator's response $H\prsig$ to the probing signal $\prsig(t)$ in \eqref{eq:probingsignal}. 
We start by noting that recovering $H$ amounts to recovering $\sfunc$ which by \eqref{def:skm} is equivalent to recovering $\mbf s(t,f)$ from \eqref{eq:sysofeq}. 
This will be accomplished by first identifying the support set of $\sfunc$, (i.e., the active cells of $\mbf s(t,f)$), and then solving the resulting linear system of equations \eqref{eq:sysofeq_rest}. 

\subsubsection{Support set recovery}

\newcommand{\IMV}[0]{(\text{P0})} % imv continuous problem
\newcommand{\IMVM}[0]{(\text{$\overline{\text{P0}}$})} % imv matrix - as arrises in array processing
\newcommand{\MMV}[0]{(\text{P0$'$})}	% problem in mmv formulation 

If we assume that $\area(M_\Gamma) \leq \sfrac{1}{2}$, then the support set $\Gamma$ can be recovered from $\mbf z(t,f)$ by solving: 
\[
\IMV \; \begin{cases} 
		\text{minimize} 	& |\Gamma| \\
		\text{subject to} 	& \mbf z(t,f) = \mbf A_{\Gamma}  \mbf s_\Gamma(t,f),\quad (t,f) \in \Un
\end{cases}
\]
where the constraint is over all $\mbf s_\Gamma(t,f)$ with $|\Gamma| \leq \sfrac{L}{2}$. 
This is a standard problem, and solutions have been proposed in the context of spectrum-blind sampling \cite{feng_spectrum-blind_1996,bresler_spectrum-blind_2008,mishali_blind_2009}, all involving a correlation matrix, which in our setup becomes $\mbf Z \defeq \int_{\Un} \mbf z(t,f) \herm{\mbf z}(t,f) d(t,f)$. The main difference to signal recovery in the context of spectrum-blind sampling  \cite{feng_spectrum-blind_1996,bresler_spectrum-blind_2008,mishali_blind_2009} is that here a function of two variables has to be recovered rather than a function of one variable. 
Using \eqref{eq:sysofeq_rest} we can express $\mbf Z$ according to
\begin{equation}
\mbf Z  =  \mbf A_{\Gamma} \mbf S_{\Gamma}  \herm{\mbf A}_{\Gamma}
\label{eq:integral}
\end{equation}
where $\mbf S_{\Gamma} = \int_{\Un}   \mbf s_{\Gamma}(t,f)  \herm{\mbf s}_{\Gamma}(t,f) d(t,f)$. 
Using similar arguments as in \cite{feng_universal_1997}, it can be shown that \IMV~is equivalent to
\[
\IMVM \; \begin{cases} 
		\text{minimize} 	& |\Gamma| \\
		\text{subject to} 	&  \mbf Z =  \mbf A_{\Gamma} \mbf S_{\Gamma}  \herm{\mbf A}_{\Gamma}
\end{cases}
\]
 where the constraint is over all Hermitian matrices $\mbf S_{\Gamma} \in \mb C^{|\Gamma| \times |\Gamma|}$. Since $\mbf Z$ is normal, it can be decomposed as $\mbf Z = \mbf Q \herm{\mbf Q}$, where the $R = \rank{\mbf Z}$ columns of the matrix $\mbf Q \in \mb C^{L \times R}$ are orthogonal. As shown by Feng \cite{feng_universal_1997}, \IMVM~(and hence \IMV) is equivalent to
\begin{equation}
\MMV \; \begin{cases} 
		\text{minimize} 	& |\Gamma| \\
		\text{subject to} 	& \mbf Q = \mbf A_{\Gamma}  \mbf G_{\Gamma} 
\end{cases}
\label{eq:constraintmmv}
\end{equation}
where the constraint is over all $\mbf G_{\Gamma} \in \mb C^{|\Gamma| \times R}$. \MMV~is known in the literature as the finite multiple-measurement vector (MMV)
 problem\footnote{The MMV problem is usually formulated as follows: Minimize $\norm[\text{row-0}]{\mtx{G}}$ subject to $\mtx Q=\mtx{A}_{\ve{c}} \mtx{G}$, where the constraint is over all $\mtx{G} \in \complexset^{L^2\times R}$ and $\norm[\text{row-0}]{\mtx{G}}$ is the number of rows of $\mtx{G}$ that contain at least one non-zero entry.} \cite{jie_chen_theoretical_2006}.  
Application of 
\cite[Thm. 2.4]{jie_chen_theoretical_2006} ensures that (P0$'$) provably recovers the correct support set $\Gamma$ as long as $|\Gamma| \leq \sfrac{L}{2}$, i.e., as long as the area of the unknown support set of the spreading function satisfies $\SSF \leq \sfrac{1}{2}$. 

\subsubsection{Recovery for known support set\label{sec:knownsupp}}
Once the support set $\Gamma$ has been identified, we solve \eqref{eq:sysofeq_rest} for $\mbf s_\Gamma(t,f)$, which based on \eqref{def:skm} yields $\sfunc$ and hence $H$. 
Note that \eqref{eq:sysofeq_rest} has to be solved over the continuum of values $(t,f) \in \Un$. 
We can expand all quantities in  \eqref{eq:sysofeq_rest} into two-dimensional Fourier series over $\Un$, which results in a system of countably many linear equations to be solved. 

\section{Identification for Almost All $H \in \opclassbr$}

We next prove Theorem \ref{th:almost_all}. 
The proof is inspired by \cite[Thm. 1]{bresler_spectrum-blind_2008}, and is constructive as it specifies the recovery algorithm (for almost all $H\in \opclassbr, \SSF < 1$). 
The basic idea is to use a MUSIC-like \cite{schmidt_multiple_1986} algorithm based on \eqref{eq:integral}, which allows us to recover $\Gamma$ under the following two conditions: 
\begin{enumerate}
\item \label{cond_1} $\SSF \leq 1-\sfrac{1}{L}$.  The penalty of $\sfrac{1}{L}$ is technical and can be made arbitrarily small by choosing $L$ large enough. 
\item \label{cond_2} The functions  $s_{k,m}(t,f), (k,m) \in \Gamma$,  are \emph{linearly independent} on $\Un$, i.e., there is no vector $\mbf a \in \mb C^{N}, \mbf a \neq \mbf 0, N=|\Gamma|$, such that $\herm{\ve{a}} \mbf s_\Gamma(t,f) =0, \,\, \forall \; (t,f) \in \Un$.
\end{enumerate}
We recognize that almost all  $H\in \opclassbr$ satisfy Condition \ref{cond_2}. 

\begin{proof}[Proof of sufficiency in Theorem \ref{th:almost_all}]
The proof is effected by establishing that, under Conditions \ref{cond_1} and \ref{cond_2} above, the support set $\Gamma$ is uniquely specified by the indices of the columns of $\herm{\mbf U}_n \mbf A_{\mbf c}$ that are equal to $\mbf 0$. Here, $\mbf U_n$ is the matrix of eigenvectors of $\mbf Z$ corresponding to zero eigenvalues. 
To see this, we perform an eigenvalue decomposition of $\mbf Z$ in \eqref{eq:integral} to obtain
\begin{equation}
\mbf Z = 
\begin{bmatrix}
\mbf U_z  & \mbf U_n 
\end{bmatrix}
  \begin{bmatrix} \mbf \Lambda_z & \mbf 0\\   \mbf 0 &  \mbf 0   \end{bmatrix}  
  \begin{bmatrix}
  \herm{\mbf U}_z \\
  \herm{\mbf U}_n
  \end{bmatrix}
  = \mbf A_{\Gamma}  \mbf S_\Gamma \herm{\mbf A}_{\Gamma}
\label{eq:ev1}
\end{equation}
where $\mbf U_{z}$ contains the eigenvectors of $\mbf Z$, corresponding to the non-zero eigenvalues of $\mbf Z$. 
Lemma \ref{pr:fullrankz} in the appendix establishes that, thanks to Condition \ref{cond_2} above, for almost all $H\in \opclassbr$, $\ve{S}_{\Gamma}$ has full rank. 
As discussed in Section \ref{sec:suff}, each set of $L$ or fewer columns of $\mbf A_{\mbf c}$ is linearly independent. 
Condition \ref{cond_1} ensures that $|\Gamma|\leq L-1$, which implies that $\mbf A_\Gamma$ has full rank for all sets $\Gamma$ in question. 
Hence we get from 
 \eqref{eq:ev1} that 
 \begin{equation}
 \range(\mbf A_{\Gamma}) = \range(\mbf A_{\Gamma}  \mbf S_\Gamma \herm{\mbf A}_{\Gamma})= \range(\mbf U_z \mbf \Lambda_z \herm{\mbf U}_{z}) = \range(\mbf U_z).
\label{eq:rankeq}
\end{equation}
 $\range(\mbf U_n)$ is the orthogonal complement of $\range(\mbf U_z)$ in ${\mathbb C}^L$. It therefore follows from \eqref{eq:rankeq} that $\herm{\mbf U}_n \mbf A_\Gamma=\mbf 0$. 
Therefore, the columns of $\herm{\mbf U}_n \mbf A_{\mbf c}$ that correspond to indices $(k,m) \in \Gamma$ are equal to $\mbf 0$. 
To conclude the proof, we show, by contradiction, that no other columns of $\herm{\mbf U}_n \mbf A_{\mbf c}$ are equal to $\mbf 0$. 
Suppose that $\herm{\mbf U}_n \mbf a = \mbf 0$ where $\mbf a$  is any column of $\mbf A_{\mbf c}$ corresponding to an index pair $(k',m') \notin \Gamma$. 
Then $\mbf a \in \range(\mbf U_z)= \range(\mbf A_{\Gamma})$. 
This would mean that the $L$ or fewer columns of $\mbf A_{\mbf c}$ corresponding to the indices $(k,m) \in \{\Gamma \cup (k',m')\}$ would be linearly dependent. The proof is completed by noting that this stands in contradiction to the fact that---since we assume that $\mbf c$ is chosen accordingly---each set of $L$ or fewer columns of $\mbf A_{\mbf c}$ is linearly independent. 
\end{proof}

\appendix

\begin{IEEEproof}[Proof of Lemma \ref{le:boundedness_classr}]
Starting from \eqref{eq:sysofeq_rest}, we get for fixed values of $(t,f)\in \Un$
\begin{equation}
 \frac{\alpha_\Gamma}{\sqrt{TL}}  \norm[\ell_2]{\mbf s_\Gamma (t,f)}   \leq   \norm[\ell_2]{\mbf z (t,f)} \leq   \frac{\beta_\Gamma}{\sqrt{\csamp L}}  \norm[\ell_2]{\mbf s_\Gamma (t,f)} 
 \label{eq:upperlower}
\end{equation}
 with $\alpha_\Gamma, \beta_\Gamma$ defined in \eqref{eq:stabbounds}. 
 Squaring and integrating \eqref{eq:upperlower} over $\Un$ yields 
\newpage
\phantom{adf}
\vspace{-0.9cm}
\begin{align}
&\int_{\Un}  \norm[\ell_2]{\mbf z (t,f)}^2 d(t,f) = \sum_{p = 0}^{L-1} \int_{\Un}  \left| z_p(t,f)  \right|^2  d(t,f)  \nonumber \\
&= \int_{ [0, \csamp L)  \times [0,1/\csamp L)}  |\mc Z_{y} (t,f) |^2 d(t,f) =  \frac{1}{\csamp L} \norm[]{H \prsig}^2
\label{eq:hzeq}
\end{align}
where the last equality follows from the unitarity of the Zak transform \cite{janssen_zak_1988}. 
Similarly, we have
\vspace{-0.1cm} 
\begin{equation}
\int_{\Un}   \norm[\ell_2]{\mbf s_\Gamma (t,f)}^2  d(t,f)  = \norm[]{\sfunc}^2 = \norm[\opclass]{H}^2.
\label{eq:normheqints}
\vspace{-0.1cm}
\end{equation}
Inserting \eqref{eq:normheqints} and \eqref{eq:hzeq}  into \eqref{eq:upperlower} completes the proof. 
\end{IEEEproof}

\begin{lemma}
$\ve{S}_{\Gamma}$ has full rank if and only if the functions $s_{k,m}(t,f), (k,m) \in \Gamma$, are linearly independent on $\Un$.   
		  \label{pr:fullrankz}
\end{lemma}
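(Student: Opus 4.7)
The plan is to recognize that $\mathbf{S}_\Gamma = \int_U \mathbf{s}_\Gamma(t,f) \mathbf{s}_\Gamma^H(t,f)\,d(t,f)$ is the Gram matrix associated with the inner products $\langle s_{k,m}, s_{k',m'}\rangle = \int_U s_{k,m}(t,f)\,\overline{s_{k',m'}(t,f)}\,d(t,f)$, indexed over $(k,m),(k',m') \in \Gamma$. The statement of the lemma is then the standard fact that a Gram matrix is nonsingular if and only if the generating family of vectors (here, functions in $L^2(U)$) is linearly independent.

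Concretely, I would proceed in two steps. First, note that $\mathbf{S}_\Gamma$ is Hermitian positive semidefinite, since for any $\mathbf{a}\in\mathbb{C}^{|\Gamma|}$,
\begin{equation}
\mathbf{a}^H \mathbf{S}_\Gamma \mathbf{a} = \int_U \mathbf{a}^H \mathbf{s}_\Gamma(t,f)\, \mathbf{s}_\Gamma^H(t,f)\,\mathbf{a}\,d(t,f) = \int_U \left|\mathbf{a}^H \mathbf{s}_\Gamma(t,f)\right|^2 d(t,f). \nonumber
\end{equation}
Because $\mathbf{S}_\Gamma$ is positive semidefinite, it fails to have full rank if and only if there exists a nonzero $\mathbf{a}\in\mathbb{C}^{|\Gamma|}$ with $\mathbf{a}^H \mathbf{S}_\Gamma \mathbf{a} = 0$.

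Second, by the identity above, $\mathbf{a}^H \mathbf{S}_\Gamma \mathbf{a} = 0$ holds if and only if $\mathbf{a}^H \mathbf{s}_\Gamma(t,f) = 0$ for almost every $(t,f)\in U$; by continuity of the involved time-frequency shifted functions (and since the defining condition of linear dependence in the statement is formulated pointwise on $U$), this is equivalent to $\mathbf{a}^H \mathbf{s}_\Gamma(t,f) = 0$ for all $(t,f)\in U$. Writing out $\mathbf{a}^H \mathbf{s}_\Gamma(t,f) = \sum_{(k,m)\in\Gamma} \overline{a_{k,m}}\, s_{k,m}(t,f)$, this is precisely the linear dependence of $\{s_{k,m}(t,f): (k,m)\in\Gamma\}$ on $U$. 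Contrapositively, $\mathbf{S}_\Gamma$ has full rank iff no such nonzero $\mathbf{a}$ exists, iff the family $\{s_{k,m}\}_{(k,m)\in\Gamma}$ is linearly independent on $U$, which establishes the lemma.

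The only mildly delicate point is the "almost everywhere versus everywhere" issue in step two; the rest is essentially bookkeeping, leveraging positive semidefiniteness of Gram matrices. Since the statement of linear independence in Condition 2 of Section VI is phrased pointwise on $U$, one must either invoke continuity of $s_{k,m}$ on $U$ (which holds for a dense subclass of admissible spreading functions) or, more cleanly, reinterpret linear independence in the $L^2(U)$ sense and note that the two notions agree for the functions at hand. No further ingredients are required.
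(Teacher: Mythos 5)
Your proof is correct and is essentially the same argument as the paper's: both reduce the claim to the vanishing of the quadratic form $\herm{\ve{a}}\ve{S}_\Gamma\ve{a}=\int_{\Un}|\herm{\ve{a}}\ve{s}_\Gamma(t,f)|^2\,d(t,f)$ and identify this with linear dependence of the $s_{k,m}$ on $\Un$. Your explicit use of positive semidefiniteness and your remark on the almost-everywhere versus pointwise issue are slightly more careful than the paper's two-direction write-up, but the substance is identical.
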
 
\begin{IEEEproof}
Assume that  $\ve{S}_{\Gamma}$ does not have full rank. Then there exists an $\ve{a} \neq \mbf 0, \ve{a} \in \mb C^{N}$ such that $\herm{\ve{a}} \ve{S}_{\Gamma} = \mbf 0$ and hence $\herm{\ve{a}} \ve{S}_{\Gamma} \ve{a} = \int_{\Un} 
\herm{\ve{a}} \mbf s_{\Gamma}(t,f) \hspace{0.015cm} \herm{( \herm{\ve{a}} \mbf s_{\Gamma}(t,f) )} d(t,f) = 0$. Since $ \herm{\ve{a}} \mbf s_{\Gamma}(t,f)  \herm{( \herm{\ve{a}} \mbf s_{\Gamma}(t,f) )} \geq 0, \forall (t,f) \in  \Un$, we must have $\herm{\ve{a}} \mbf s_{\Gamma}(t,f) = 0$ a.e. on $\Un$, which implies that the set $s_{k,m}(t,f),   (k,m) \in \Gamma$, is linearly dependent on $\Un$. 
Now assume that the set $s_{k,m}(t,f),   (k,m) \in \Gamma$, is linearly dependent on $\Un$. Then, there exists an $\ve{a}\neq \mbf 0$ such that  $\int_{\Un} \herm{\ve{a}} \mbf s_{\Gamma}(t,f) \hspace{0.015cm} \herm{\mbf s}_{\Gamma}(t,f) d(t,f) =\mbf 0$, and hence, using $\ve{S}_{\Gamma} = \int_{\Un} \mbf s_{\Gamma}(t,f) \hspace{0.015cm} \herm{\mbf s}_{\Gamma}(t,f) d(t,f)$, by definition, we get 
$\herm{\ve{a}} \ve{S}_{\Gamma}= \mbf 0$ which proves that $\ve{S}_{\Gamma}$ does not have full rank.
	\end{IEEEproof}

\vspace{-0.1cm}

\renewcommand{\baselinestretch}{0.943}\small\normalsize

% Generated by IEEEtran.bst, version: 1.13 (2008/09/30)

\end{document}